\colorlet{darkgreen}{green!45!black}
\newcommand{\R}{\mathbb{R}}
\newcommand{\etal}{et al.\xspace}
\newtheorem{lemma}{Lemma}
\begin{document}
\title{Balanced power diagrams for redistricting}
\author{Vincent Cohen-Addad\thanks{CNRS, UPMC Paris}\and
  Philip N. Klein\thanks{Brown University, Research supported by
    National Science Foundation Grant CCF-1409520.} \and Neal
  E. Young\thanks{University of California, Riverside.  Research
    supported by NSF Grant IIS-1619463}}
\date{January 6, 2018}
\maketitle 

\begin{abstract}
We explore a method for \emph{redistricting}, decomposing a  
% We propose a method for \emph{redistricting}, decomposing a  
geographical area into subareas, called \emph{districts}, so that the
populations of the districts are as close as possible and the
districts are compact and contiguous.  Each district is the
intersection of a polygon with the geographical area.  The polygons
are convex and the average number of sides per polygon is less than
six.  The polygons tend to be quite compact.  With each polygon is
associated a \emph{center}.  The center is the centroid of the
locations of the residents associated with the polygon.  The algorithm
can be viewed as a heuristic for finding centers and a balanced assignment of
residents to centers so as to minimize the sum of squared distances of
residents to centers; hence the solution can be said to have low
dispersion.
\end{abstract}

\section{Introduction}\label{sec: intro}
% We observe that the optimization problem \emph{capacitated $k$-means clustering} is
% relevant to a real-world problem, \emph{redistricting}.
\paragraph{Redistricting.}

\emph{Redistricting},
in the context of elections
refers to decomposing a geographical area into subareas such that all
subareas have the same population.  The subareas are called 
\emph{districts}.  In most US states, districts are supposed to be 
\emph{contiguous} to the extent that is possible.  
Contiguous can reasonably be interpreted to mean 
\emph{connected}.  

In most states, districts are also supposed to be 
\emph{compact}.  This is not precisely defined in law.  
Some measures of
compactness are based on boundaries; a district is preferred if its
boundaries are simpler rather than contorted.  Some measures are based
on \emph{dispersion}, ``the degree to which the district spreads from a
central core''~\cite{Levitt}.
Idaho directs its redistricting commision
to ``avoid drawing districts that are oddly shaped.''  Other states
loosely address the meaning of compactness: ``Arizona and
Colorado focus on contorted boundaries; California, Michigan, and
Montana focus on dispersion; and Iowa embraces both''~\cite{Levitt}.

\paragraph{Balanced centroidal power diagrams}\label{sec: intro def}

The goal of this paper is to explore a particular approach to redistricting: 
% The goal of this paper is to propose a particular approach to redistricting: 
\emph{balanced centroidal power diagrams}.  Given the locations
of a state's $m$ residents and given the desired number $k$ of
districts, a balanced centroidal power diagram
partitions the state into $k$ districts with the following desirable properties:
\begin{description}
\item[(P1)] each district is the intersection of the state with a convex polygon,\label{prop:polygon}
\item[(P2)] the average number of sides per polygon is less than six, and
\item[(P3)] the populations of the districts differ by at most one.\label{prop:population}
\end{description}

A balanced centroidal power diagram is a particular kind of
(not necessarily optimal) solution to an optimization problem called
\emph{balanced $k$-means clustering}:  given a set $P$ of $m$ points (the \emph{residents})
and the desired number $k$ of clusters, a
solution (not necessarily of minimum cost) consists of a sequence $C$ of $k$ points
(the \emph{centers}) and an assignment $f$ of
residents to centers that is \emph{balanced}:
it assigns $\lfloor m/k\rfloor$ residents to the first $i$ centers,
and $\lceil m/k\rceil$ residents to the remaining $k-i$ centers
(for the $i$ such that $i \lfloor m/k\rfloor + (k-i) \lceil m/k\rceil = m$).
The \emph{cost} of a
solution $(C,f)$ is the sum, over the residents,
of the square of the Euclidean distance between the resident's location and assigned center.
(This is a natural measure of dispersion.)
In \emph{balanced $k$-means clustering}, one seeks a solution of minimum cost. 
This problem is NP-hard~\cite{mahajan2009planar}.

A balanced centroidal power diagram arises from a solution to balanced
$k$-means clustering that is not necessarily of minimum cost.
Instead, the solution $(C, f)$ only needs to be a \emph{local minimum},
meaning that it is not possible to lower the cost by just varying $f$
(leaving $C$ fixed), or just varying $C$ (leaving $f$ fixed).
Local minima tend to have low cost, so tend to have low dispersion.

\smallskip

Section~\ref{sec: history} reviews the meaning of the
terms \emph{centroidal} and \emph{power diagram},
and discusses how any such local minimum yields districts
(with each district containing the residents assigned to one center)
for which the desirable properties~(P1)--(P3) 
are mathematically guaranteed.
Convex polygons with few sides are arguably well shaped,
and their boundaries are arguably not contorted.
% We propose an algorithm for finding such
% decompositions that tends to find solutions that \emph{also} have very small
% dispersion according to an intuitive measure we describe shortly.

The idea and its application to redistricting are not novel.  Spann et
al.~\cite{spann_electoral_2007} describes a method to find a
centroidal power diagram that is nearly balanced (to within 2\%).
Their solutions are thus not exactly balanced in the sense we have
defined.  We discuss this and other related work
in Section~\ref{sec: history}.

Figures~\ref{fig:FL} to~\ref{fig:LI}
show proposed districts corresponding to balanced centroidal power diagrams
for the six most populous states in the U.S,
based on population data from the 2010 census
(locating each resident at the centroid of that resident's census
block).  We will also show such diagrams at a web site, ~\url{district.cs.brown.edu}.

We computed these diagrams efficiently using a variant of Lloyd's algorithm:
start with a random set $C$ of centers,\footnote
{The probability distribution we used for the initial set of centers is from~\cite{ArthurV07}.}
then repeat the following steps until an equilibrium is reached:
(1) given the current set $C$ of centers, compute a balanced assignment $f$ that minimizes the cost;
(2) given that assignment $f$, change the locations of the centers in $C$ so as to minimize the cost.

\newcommand{\imagefigure}[3]
{\begin{figure}\centering
    \includegraphics[trim=2.8cm 0cm 2.6cm 0cm,width=\textwidth]{Images/#1}
  \caption{#2}\label{#3}
\end{figure}}

\imagefigure{florida.pdf}{Florida (27 districts)}{fig:FL}

\imagefigure{california.pdf}{California (53 districts).}{fig:CA}

\imagefigure{gnuplot__5_06SF.pdf}{Bay Area (detail of \emph{California}).}{fig:bay}

\imagefigure{texas.pdf}{Texas (36 districts).}{fig:TX}

\imagefigure{alabama.pdf}{Alabama (7 districts).}{fig:AL}

\imagefigure{illinois.pdf}{Illinois (18 districts).}{fig:IL}

\imagefigure{new-york.pdf}{New York (27 districts).}{fig:NY}

\imagefigure{gnuplot__5_36NY.pdf}{Long Island, New York and Manhattan
  (detail from  \emph{New York}).}{fig:LI}

Some might object that the method does not provide the scope for
achieving some other goals, e.g.\ creating competitive districts.  A
counterargument is that one should \emph{avoid} providing politically
motivated legislators the scope to select boundaries of districts so
as to advance political goals.  According to this argument, the less
freedom to influence the district boundaries, the better.  This method
does not guarantee fairness in outcome; the fairness is in the
process.  This point was made, e.g., by Miller~\cite{miller_problem_2007}.

Note that in real applications of redistricting, the locations of
people are not given precisely.  Rather, there are regions, called
\emph{census blocks}, and each such region's population is specified.

%%% Local Variables:
%%% TeX-master: "equitable"
%%% Mode: latex
%%% End:

\section{Balanced centroidal power diagrams}\label{sec: history}

The use of optimization, generally, for redistricting has been
proposed starting at least as far back as 1965 and has continued up to the
present~\cite{hess,garfinkel_optimal_1970,eppstein_defining_2017}.
See~\cite{altman2010promise,olson_rangevoting.org_2011} for additional references.

% \textcolor{red}{
%   \sout{We note especially the work of Helbig, Orr, and
%     Roediger}~\cite{helbig1972political}\sout{ in 1972, who proposed an iterative
%     method based on the \emph{transportation problem}.  We will discuss
%     their method in greater detail later.}
% }

In what follows, we focus specifically on the use of balanced centroidal power diagrams.
Next is a summary of the relevant history, interspersed with necessary definitions.
Throughout,
$P$ (the \emph{population}) denotes a set of $m$ \emph{residents} (points in a Euclidean space),
$C$ denotes a sequence of $k$ \emph{centers} (points in the same space),
$f:P\rightarrow C$ denotes an assignment of residents to centers,
and $d(y, x)$ denotes the distance from $y\in P$ to $x\in C$.
We generally consider the parameters $P$ and $k$ to be fixed throughout,
while $C$ and $f$ vary.

\newcommand{\power}[2]{{\cal P}(#1, #2)}
\newcommand{\Power}[3]{{\cal P}(#1, #2, #3)}
\newcommand{\voronoi}[1]{{\cal V}(#1)}
\newcommand{\Voronoi}[2]{{\cal V}(#1, #2)}

\paragraph{The power diagram of $(C, w)$.} 
Given any sequence $C$ of centers, and a weight $w_x\in\R$ for each center $x\in C$,
the \emph{power diagram} of $(C, w)$, denoted $\power C w$, is defined as follows.
For any center $x\in C$, 
the \emph{weighted squared distance} from any point $y$ to $x$ is $d^2(y,x) - w_x$.
The \emph{power region} $C_x$ associated with $x$ consists of all points
whose weighted squared distance to $x$ is no more than the weighted squared distance to any other center. 
The power diagram $\power C w$ is the collection of these power regions.

An assignment $f: P \rightarrow C$ is \emph{consistent} with $\power C w$
if every resident assigned to center $x$ belongs to the corresponding region $C_x$.
(Residents in the interior of $C_x$ are necessarily assigned to $x$.)
$\Power C w f$ denotes the power diagram $\power C w$ augmented with such an assignment.

Power diagrams are well-studied~\cite{aurenhammer_power_1987}. 
If the Euclidean space is $\R^2$, it is known that each power region $C_x$ is necessarily a (possibly infinite) convex polygon.
If each weight $w_x$ is zero, the power diagram is also called a \emph{Voronoi diagram},
and denoted $\voronoi C$.
Likewise $\Voronoi C f$ denotes the Voronoi diagram extended with a consistent assignment $f$
(which simply assigns each resident to a nearest center).

\paragraph{Centroidal power diagrams.} 
A \emph{centroidal power diagram} is an augmented power diagram $\Power C w f$
such that the assignment $f$ is \emph{centroidal}:
each center $x\in C$ is the centroid (center of mass) of its assigned residents, $\{y\in P: x = f(y)\}$.

\paragraph{Centroidal Voronoi diagrams.} 
Centroidal Voronoi diagrams (a special case of centroidal power diagrams)
have many applications~\cite{du_centroidal_1999}.
One canonical application from graphics is downsampling a given image,
by partitioning the image into regions, then selecting a single pixel from each region to represent the region.
Centroidal Voronoi diagrams are preferred over arbitrary Voronoi diagrams
because the regions in centroidal Voronoi diagrams tend to be more compact.

\smallskip

\emph{Lloyd's method} is a standard way to compute a centroidal Voronoi diagram $\Voronoi C f$,
given $P$ and the desired number of centers, $k$~\cite
[\S\,5.2]{du_centroidal_1999}.
Starting with a sequence $C$ of $k$ randomly chosen centers, 
the method repeats the following steps until the steps do not cause a
change in $f$ or $C$:
\begin{enumerate}
\item Given $C$, let $f$ be any assignment assigning each resident to a nearest center in $C$.
\item Move each center $x\in C$ to the centroid of the residents that $f$ assigns to $x$.
\end{enumerate}

Recall that the \emph{cost} is $\sum_{y\in P} d^2(y, f(y))$. 
Step (1) chooses an $f$ of minimum cost, given $C$.
Step (2) moves the centers to minimize the cost, given $f$.
Each iteration except the last reduces the cost, so the algorithm terminates
and, at termination, $(C,f)$ is a \emph{local minimum} in the following sense:
by just moving centers in $C$, or just changing $f$, it is not possible to reduce the cost.

In the last iteration,
Step (1) computes $f$ that is consistent with $\voronoi C$,
and Step (2) does not change $C$, so $f$ is centroidal.
So, at termination, $\Voronoi C f$ is the desired centroidal Voronoi diagram.

\smallskip

Miller~\cite{miller_problem_2007} and Kleiner \etal~\cite{kleiner_political_2013}
explore the use of centroidal Voronoi diagrams specifically for \emph{redistricting}.
The resulting districts (regions) are guaranteed to be polygonal,
and tend to be compact, but their populations can be far from balanced.
To address this, %\textcolor{red}{following~\cite{spann_electoral_2007}},
consider instead \emph{balanced} centroidal power diagrams, described next,
% We compute them using a capacitated variant of Lloyd's method.
%\textcolor{red}
{which can be computed using} a capacitated variant of Lloyd's method.

\paragraph{Balanced power diagrams.}
A \emph{balanced} power diagram is an augmented power diagram $\Power C w f$
such that the assignment $f$ is balanced (as defined in the introduction).
Hence, the numbers of residents in the regions of $\power C w$ differ by at most 1.
Such regions are desirable in many applications.

Aurenhammer \etal~\cite[Theorem~1]{aurenhammer_minkowski-type_1998}
give an algorithm that, in the case of a Euclidean metric, given $P$ and $C$, 
computes weights $w$ and an assignment $f$
such that $\Power C w f$ is a balanced power diagram,
and $f$ has minimum cost among all balanced assignments of $P$ to $C$.
%\textcolor{red}{By a duality argument similar to one in~\cite{spann_electoral_2007}},
We observe in Section~\ref{duality} that, given $P$, $C$,
there exist weights $w$ such that $\Power C w f$ is a balanced power diagram
for \emph{any} minimum-cost balanced assignment $f$ and any metric.
Such an argument was previously presented by Spann et al.~\cite{spann_electoral_2007}.

\paragraph{Computing a balanced centroidal power diagram for $P$.} 
A \emph{balanced centroidal power diagram} is an augmented power diagram $\Power C w f$
such that $f$ is both balanced and centroidal.
We %\textcolor{red}
{implement} the following capacitated variant of Lloyd's method to compute such a diagram,
given $P$ and the desired number $k$ of centers.
Starting with a sequence $C$ of $k$ randomly chosen centers,
repeat the following steps until Step (2) doesn't change $C$:
\begin{enumerate}
\item Given $C$, compute a minimum-cost balanced assignment $f:P\rightarrow C$.
\item Move each center $x\in C$ to the centroid of the residents that $f$ assigns to it.
\end{enumerate}

As in the analysis of the uncapacitated method,
each iteration except the last reduces the cost, $\sum_{y\in P} d^2(y, f(y))$,
and at termination, the pair $(C, f)$ is a local minimum in the following sense:
by just moving the centers in $C$,
or just changing $f$ (while respecting the balance constraint), it is not possible to reduce the cost.

\smallskip

The problem in Step (1) can be solved via Aurenhammer \etal's algorithm, described previously.
Instead, as described in Section~\ref{duality},
we solve it by reducing it to minimum-cost flow;
yielding both the stipulated $f$
and (via the dual variables) weights $w$ such that $\Power C w f$ is a
balanced power diagram.  Note that the solution obtained by
minimum-cost flow assigns assigns each person to a single district. 
In the last iteration,  
Step (2) does not change $C$, so $f$ is also centroidal,
and at termination $\Power C w f$ is a balanced centroidal power
diagram, as desired. 

\smallskip

In previous work, Spann \etal~\cite{spann_electoral_2007} proposed a
similar iterative method to find a centroidal power diagram.  They did
not seem to be aware of the work of Aurenhammer
\cite{aurenhammer_minkowski-type_1998} but used a duality argument to
derive power weights.  It is
not clear from their paper precisely how their method carries out
Step~(1).  They state that their implementation starts by allowing a
20\% deviation from balance, and iteratively reduces the allowed
deviation over a series of iterations, adjusting the target
populations per district, and terminates when the deviation is within
2\% of balanced.  We believe that the additional complexity and the failure
to achieve perfect balance is a result of the authors' effort to
ensure that census blocks are not split.

Hess et al.~\cite{hess} had previously given a similar method.  Like
that of Spann et al., this method ensures that census enumeration
districts (analogous to census blocks) were not split, and as a
consequence did not achieve perfect balance.  Unlike the method of
Spann et al., the method of Hess et
al. did not compute power weights and did not output a power diagram;
presumably each output district is defined as the union of census
enumeration districts and is therefore not guaranteed to be connected.
 
Balzer \etal~\cite{balzer_capacity-constrained_2008,balzer_capacity-constrained_2009}
proposed an algorithm equivalent to {the iterative algorithm above},
except that a local-exchange heuristic (updating $f$ by swapping pairs of residents) was proposed to carry out Step (1).
That heuristic does not guarantee that $f$ has minimum cost (given $C$),
so does not in fact guarantee that the assignment is consistent with a balanced power diagram (see Figure~\ref{fig: Balzer incorrect}).

{Other} previously published algorithms~\cite
{balzer_capacity-constrained_2008,balzer_capacity-constrained_2009,li_fast_2010,de_goes_blue_2012,xin_centroidal_2016}
for balanced centroidal power diagrams address applications (e.g.~in graphs) that have very large instances,
and for which it is not crucial that the power diagrams be exactly centroidal or exactly balanced.
{This class of algorithms prioritize speed, and none are}
guaranteed to find a local minimum $(C, f)$, nor a balanced centroidal power diagram.

%\marginpar{{\footnotesize\raggedright Maybe cut down (or out) the text about Helbig et al? -N}}
Helbig, Orr, and Roediger~\cite{helbig1972political} proposed a somewhat similar
redistricting algorithm.  Like {the algorithm above}, their algorithm
initializes the center locations randomly and then alternates between
(1) using mathematical programming to find an assignment of residents
to centers and (2) replacing each center with the centroid of the
residents assigned to it.

But their
assignment of residents to centers is chosen in each iteration to
minimize the sum of distances, not the sum of squared distances.  This
means that the partition does not correspond to a power diagram.
Indeed, Helbig \etal acknowledge the possibility that noncontiguous
districts could result although they did not observe this occurring.

Their mathematical program for the assignment also
treats each ``population unit'' (e.g.~census block) atomically,
rather than treating each individual that way.  Thus their method never splits a population unit into two districts.
While a solution with this property might be desirable, imposing this
requirement means that a solution might not exist that achieves
population balance.  Moreover, their
mathematical program constrains the \emph{number} of population units
assigned to a center to be a certain number, rather than constraining
the population to be a certain number.  Since different population
units have different populations, this might not achieve population balance.
Helbig \etal address this issue by iteratively modifying the number
of population units to be assigned to each center using a heuristic.
This does not guarantee convergence, so they allow their algorithm to stop
before reaching a true local minimum.

{For an excellent survey of the redistricting algorithms,
  including additional discussion of the Spann \etal algorithm
  and extensions to districts lying on the sphere,
  see the online survey by Olson and Smith~\cite{olson_rangevoting.org_2011}.}

%%% Local Variables:
%%% TeX-master: "equitable"
%%% Mode: latex
%%% End:

\begin{figure}[t]
  \centering
  \includegraphics[height=1.25in]{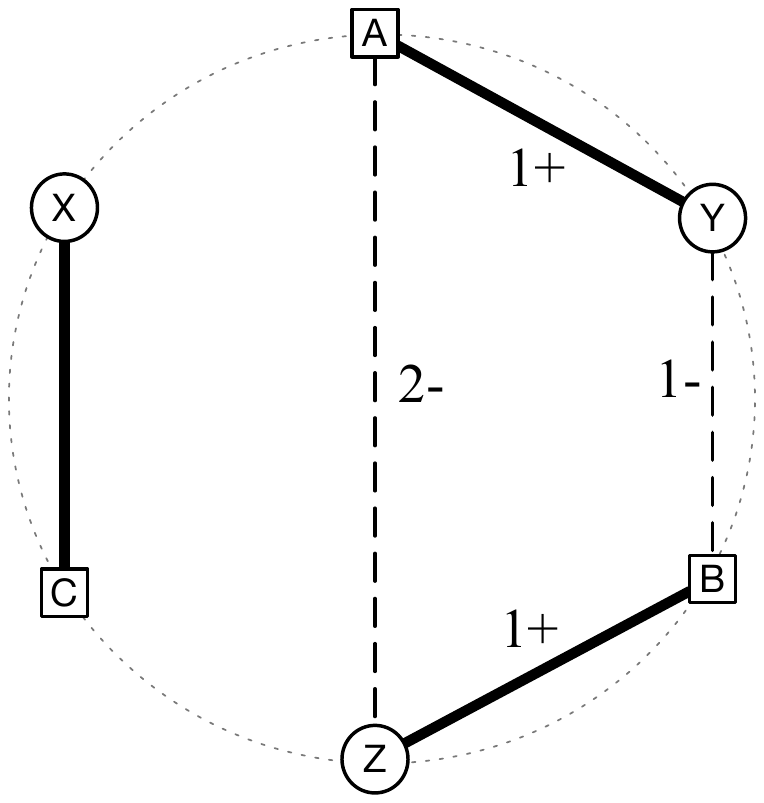}
  \caption{\small A counter-example to the swap-based balanced assignment algorithm of Balzer
    \etal~\cite{balzer_capacity-constrained_2008,balzer_capacity-constrained_2009}.
    The $k=3$ centers $C=\{A,B,C\}$ (each of capacity 1) are the even vertices of a hexagon with unit sides.
    The $m=3$ residents $P=\{X,Y,Z\}$  are the odd vertices.
    The vertices are perturbed slightly so that
    the edges in $M'=\{(A,Y), (B,Z), (C, X)\}$ each have distance slightly more than 1,
    while the edges in $M^*=\{(A,X), (B,Y), (C,Z)\}$ each have distance slightly less than 1,
    making $M^*$ the optimal assignment.
    But if $M'$ is the current assignment,
    for any two sites, say, $A$ and $B$,
    there is only one possible swap,
    and it \emph{increases} the squared distances
    (by about $(2^2+1^1) - (1^2 + 1^1) = 3$).
    So no local improvement is possible from $M'$,
    even though $M'$ does not minimize the sum of the squared distances.
  }\label{fig: Balzer incorrect}
\end{figure}

%%% Local Variables:
%%% mode: latex
%%% TeX-master: "equitable"
%%% End:

\section{An Implementation}\label{sec: details}

\subsection{Minimum-cost flow}\label{duality}
Aurenhammer et al.~\cite{aurenhammer_minkowski-type_1998} provide an
algorithm that, given the set $P$ of locations of residents and the
sequence $C$ of centers, and given a target population for each
center (where the targets sum to the total population), finds a
minimum-cost assignment $f$ of residents to centers 
subject to the constraint that the number of residents assigned to
each center equals the center's target population.  Their algorithm
also outputs weights $w$ for the centers such that the assignment $f$
is consistent with $\power C w$.  Their algorithm can be used to find
a minimum-cost balanced assignment by using appropriate targets.

In the implementation here, we take a different approach to computing the
minimum-cost balanced assignment: we use an algorithm for minimum-cost flow.
Aurenhammer et al.~\cite{aurenhammer_minkowski-type_1998} acknowledge
that a minimum-cost flow algorithm can be used but argue that their
method is more computationally efficient.  As we observe below,
the necessary weights $w$ can be computed from the values of the variables of the
linear-programming dual to minimum-cost flow.
% Note that this approach works in metrics other than Euclidean metrics.

The goal is to find a balanced assignment $f:P\rightarrow C$ of minimum cost, $\sum_{y\in P} d^2(y, p(y))$.
Let $u_x\in\{\lfloor m/k\rfloor, \lceil m/k\rceil\}$ be the number of residents
that $f$ must assign to center $x\in C$.

Consider the following linear program and dual:

\begin{center}
  \begin{tabular}{|@{ } l @{ } | @{ } l @{ } |} \hline
    \adjustbox{valign=t}{
    \parbox{0.4\textwidth}{
    \begin{align*}
      & \text{minimize}_a~ \lefteqn{ \textstyle\sum_{y\in P, x\in C} d^2(y, x)\, a_{yx} } \\
      & \text{subject to }  & \textstyle\sum_{y\in P} a_{yx} & {} = \mu_x & (x\in C) \\
      & & \textstyle\sum_{x\in C} a_{yx}& {} = 1 & (y\in P) \\
      & & a_{yx} & {} \ge 0 & (x\in C, y\in P)
    \end{align*} % \\
    }}
      & % \\
    \adjustbox{valign=t}{
    \parbox{0.5\textwidth}{
    \begin{align*}
      & \text{maximize}_{w,z}~ \lefteqn{ \textstyle\sum_{x\in C} \mu_x \, w_x + \sum_{y\in P} z_y } \\
      & \text{subject to } & z_y & \le d^2(y,x) - w_x & (x\in C, y\in P)
    \end{align*} % \\
    }}
    \\ \hline
  \end{tabular}
\end{center}

This linear program models the standard \emph{transshipment} problem.
As the capacities $\mu_x$ are integers with $\sum_x \mu_x = |P|$,
it is well-known that the basic feasible solutions to the linear program are 0/1 solutions ($a_{yx} \in \{0,1\}$),
and that the (optimal) solutions $a$ correspond to the (minimum-cost) balanced assignments $f:C\rightarrow P$
such that $a_{yx} = 1$ if $f(y) = x$ and $a_{yx} = 0$ otherwise.
%\textcolor{red}
{The implementation here} solves the linear program and dual by
using Goldberg's minimum-cost flow solver~\cite{Goldberg97}
to obtain a minimum-cost balanced assignment $f^*$ and an optimal dual solution $(w^*,z^*)$.
For any minimum-cost balanced assignment $f$ (such as $f^*$)
the resulting weight vector $w^*$ gives a balanced power diagram $\Power C {w^*} f$:

\begin{lemma}[{{see also~\cite{spann_electoral_2007}}}]
  Let $(w^*,z^*)$ b any optimal solution to the dual linear program above.
  Let $f$ be any balanced assignment.
  Then $\Power C {w^*} f$ is a balanced power diagram
  if and only if $f$ is a minimum-cost balanced assignment.
\end{lemma}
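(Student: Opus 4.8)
The plan is to recognize this as an instance of linear-programming complementary slackness, arranged so that both conditions in the statement collapse onto the same equality holding term-by-term. First I would associate with any balanced assignment $f$ the $0/1$ primal solution $a^f$ defined by $a^f_{yx} = 1$ when $f(y) = x$ and $a^f_{yx} = 0$ otherwise. Balance of $f$ makes $a^f$ primal-feasible, its objective value equals $\mathrm{cost}(f) = \sum_{y\in P} d^2(y, f(y))$, and by the correspondence already noted between $0/1$ primal solutions and balanced assignments (together with integrality of the transportation polytope), $f$ is a minimum-cost balanced assignment if and only if $a^f$ is an optimal primal solution, i.e.\ if and only if $\mathrm{cost}(f)$ equals $\mathrm{OPT}$, the common optimal value of the primal and dual.

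The heart of the argument is a single identity. For any balanced assignment $f$ and the fixed optimal dual solution $(w^*, z^*)$, I would compute the gap between $\mathrm{cost}(f)$ and $\mathrm{OPT}$. Using balance to rewrite $\sum_{y\in P} w^*_{f(y)} = \sum_{x\in C} \mu_x w^*_x$, and strong duality to identify $\mathrm{OPT}$ with the dual objective $\sum_{x\in C}\mu_x w^*_x + \sum_{y\in P} z^*_y$ at the optimal $(w^*,z^*)$, one obtains
\begin{equation*}
\mathrm{cost}(f) - \mathrm{OPT} \;=\; \sum_{y\in P}\bigl(d^2(y, f(y)) - w^*_{f(y)} - z^*_y\bigr).
\end{equation*}
Every summand is nonnegative, since dual feasibility gives $z^*_y \le d^2(y, f(y)) - w^*_{f(y)}$ for each $y$. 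Hence $\mathrm{cost}(f) = \mathrm{OPT}$ if and only if each summand vanishes, i.e.\ $z^*_y = d^2(y, f(y)) - w^*_{f(y)}$ for all $y$.

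It then remains to match this term-by-term equality with consistency of $f$ with $\power{C}{w^*}$. Here I would first record a consequence of dual optimality: for each $y$ we must have $z^*_y = \min_{x\in C}\bigl(d^2(y, x) - w^*_x\bigr)$, since otherwise $z^*_y$ could be raised while preserving feasibility, strictly increasing the dual objective. With this in hand, consistency of $f$ with $\power{C}{w^*}$ --- namely $d^2(y, f(y)) - w^*_{f(y)} \le d^2(y, x) - w^*_x$ for all $x\in C$ and all $y$ --- is exactly the statement $d^2(y, f(y)) - w^*_{f(y)} = \min_{x\in C}(d^2(y, x) - w^*_x) = z^*_y$ for all $y$. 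This is precisely the condition characterizing $\mathrm{cost}(f) = \mathrm{OPT}$ from the previous paragraph, so chaining the two equivalences yields that $\Power{C}{w^*}{f}$ is a balanced power diagram if and only if $f$ is a minimum-cost balanced assignment.

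The routine parts are the feasibility of $a^f$ and the balance-driven rewriting $\sum_y w^*_{f(y)} = \sum_x \mu_x w^*_x$. The step I expect to require the most care is the claim $z^*_y = \min_{x}(d^2(y,x) - w^*_x)$ at dual optimality, together with the non-strict inequality in the definition of the power regions (residents on a region boundary may be consistently assigned to either incident center, so consistency is the closed condition $\le$, not $<$). Pinning these down is exactly what makes the two conditions reduce to the same term-by-term equality, which is the conceptual crux of the proof.
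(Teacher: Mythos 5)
Your proof is correct, and while it rests on the same foundation as the paper's --- LP duality for the transshipment program --- it is organized in a genuinely different way. The paper proves the two implications separately: for the ``if'' direction it invokes complementary slackness between the optimal primal solution corresponding to $f$ and the given optimal dual $(w^*,z^*)$ to obtain the tight constraint $z^*_y = d^2(y,f(y)) - w^*_{f(y)}$, and for the ``only if'' direction it constructs an \emph{auxiliary} dual solution $z'_y = d^2(y,f(y)) - w^*_{f(y)}$, verifies that $(w^*,z')$ is feasible and satisfies complementary slackness with $a$, and concludes that $a$ (hence $f$) is optimal. You instead collapse both directions onto a single duality-gap identity, $\mathrm{cost}(f) - \mathrm{OPT} = \sum_{y\in P}\bigl(d^2(y,f(y)) - w^*_{f(y)} - z^*_y\bigr)$, whose summands are nonnegative by dual feasibility, and show that \emph{both} ``$f$ has minimum cost'' and ``$f$ is consistent with ${\cal P}(C, w^*)$'' are equivalent to every summand vanishing. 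What your route buys: it is self-contained (complementary slackness is effectively re-derived rather than quoted), it works entirely with the fixed dual $(w^*,z^*)$, and it makes transparent \emph{why} the two conditions coincide --- they are the same term-by-term equality. The price is the extra observation that any optimal dual saturates $z^*_y = \min_{x\in C}\bigl(d^2(y,x) - w^*_x\bigr)$, which you justify correctly (otherwise $z^*_y$ could be raised, increasing the dual objective); the paper's construction of $z'$ sidesteps exactly this point. Both arguments also need integrality of the transportation polytope to identify LP-optimality with minimum cost among balanced assignments, and you cite this explicitly where the paper states it as well known. No gaps; your chain of equivalences is sound, including the careful handling of the closed inequality defining power regions.
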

\begin{proof}
  Let $a$ be the linear-program solution corresponding to $f$. 

  \smallskip\noindent\emph{(If.)}
  Assume that $f$ has minimum cost among balanced assignments.
  Consider any resident $y\in P$.
  By complimentary slackness, for $x'=f^*(y)$, the dual constraint for $(x', y)$ is tight,
  that is, \(z^*_y  = d^2(y, f(y)) - w^*_{f(y)}. \)
  Combining this with the dual constraint for $y$ and any other $x\in C$ gives
  \[ d^2(y, f(y)) - w^*_{f(y)} \,=\, z^*_y \,\le\, d^2(y, x) - w^*_{x}.\]
  That is, from $y$, the weighted squared distance to $f(y)$
  is no more than the weighted squared distance to any other center $x\in C$.
  So, $y$ is in the power region $C_{f(y)}$ of its assigned center $f(y)$.
  Hence, $f$ is consistent with $\power C w^*$, and $\Power C {w^*} f$ is a balanced power diagram.

  \smallskip\noindent\emph{(Only if.)}
  Assume that $f$ is consistent with $\power C {w^*}$.
  That is, the weighted squared distance from $y$ to $f(y)$
  is no more than the weighted squared distance to any other center $x\in C$.
  That is, defining $z'_y = d^2(y, f(y)) - w^*_{f(y)}$,
  \[ z'_y \,=\, d^2(y, f(y)) - w^*_x\,\le\, d^2(y, x) - w^*_{x}.\]
  Thus, $(w^*, z')$ is a feasible dual solution.
  Furthermore, the complimentary slackness conditions hold for $a$ and $(w^*,z')$.
  That is, $a_{yz} > 0 \implies f(y) = x \implies z'_y = d^2(y, x) - w^*_x$.
  Hence, $a$ and $(w^*, z')$ are optimal.
  Since $a$ is optimal, $f$ has minimum cost.
\end{proof}

%%% Local Variables:
%%% TeX-master: "equitable"
%%% Mode: latex
%%% End:

\subsection{Experiments}

We ran the implementation on various
instances of the redistricting problem.
We considered the following US states: Alabama, California,
Florida, Illinois, New York, and Texas. Note that this list
of states contains the biggest states in terms of population
and number of representatives, and so our algorithm is usually
faster on smaller states.

For each of these states, we used the data provided by
the US Census Bureau~\cite{USCB}, namely the population and
housing unit count by block from the 2010 census. Hence, the input
for our algorithm was a weighted set of points in the plane where
each point represents a block and its weight represents
the number of people living in the block.
For each state, we defined the number of clusters to be
the number of representatives prescribed for the state. See
Table~\ref{T:data} for more details.

\begin{table}[h!]%\label{T:data}
  \centering
  \begin{tabular}{|l|c|c|r|}
    \hline
    State & Number of representatives & Population
    & Number of iterations to converge\\
    \hline
    \hline
    Alabama & 7 & 4779736 & 28\\
    California & 53 & 37253956 & 49\\
    Florida & 27 & 18801310 & 51\\
    Illinois & 18 & 12830632 & 72\\
    New York & 27 & 19378102 & 65\\
    Texas & 36 & 25145561 & 42\\
    \hline
   \end{tabular}
   \caption{The states considered in our experiments together
     with the number of clusters (i.e.:\ number of representatives)
     and number of clients (i.e.\ population of the state).}\label{T:data}
\end{table}

We note that in all cases the
algorithm converged to a local optimum.

\subsection{Technical details and implementation}
{The implementation is} available at \url{https://bitbucket.org/pnklein/district}.
It is written mostly in \texttt{C++}.
Our implementation makes use of a slightly adapted version of a
min-cost flow implementation, \texttt{cs2} due to Andrew Goldberg and
Boris Cherkassky and described in~\cite{Goldberg97}.  The copyright on
\texttt{cs2} is owned by IG Systems, Inc., who grant permission to use for evaluation
purposes provided that proper acknowledgments are given.  If there is
interest, we will write a min-cost flow implementation that is unencumbered.

We also provide Python-3 scripts for reading census-block data,
reading state boundary data, finding the boundaries of the power
regions, and generating \texttt{gnuplot} files to produce the figures
shown in the paper.  These figures superimposed the boundaries of the
power regions and the boundaries of states (obtained from~\cite{USCB2}).

For our experiments, the programs were compiled using \texttt{g++-7}
and run on a laptop with processor \texttt{Intel Core i7--6600U CPU, 2.60GHz} and total virtual
memory of 8GB. The system was \texttt{Debian buster/sid}.
The total running time was less than fifteen minutes for all instances
except California, which took about an hour.

\section{Concluding remarks}

The method explored in this paper outputs districts that are convex
polygons with few sides on average and that are balanced with respect
to population, i.e. where the populations in two districts differ by
at most one.  However, such balance cannot be guaranteed under a
requirement that certain geographical regions, e.g. census blocks or
counties, remain intact.  Since the locations of people within census
blocks are not known, the requirement is sensible.
One possible way to address the requirement is to first compute districts while
disregarding the requirement, then use dynamic programming to modify
the solution to obey that requirement while minimizing the resulting
imbalance.

We have focused in this paper on the Euclidean plane.  This ensures
that each district is the intersection of the geographical region
(e.g.~state) with a polygon.  However, in view of the fact that the
method %\textcolor{red}
{explored here} might generate a district that includes residents
separated by water, mountains, etc.,
one might want to consider
a different metric, e.g.~to take travel time into account.  Suppose,
for example, the metric is that of an undirected graph with
edge-lengths.    One can use essentially the same algorithm for
finding a balanced centroidal power diagram.  Computing a minimum-cost
balanced assignment (Step~1) and the associated weights can still be
done using an algorithm for minimum-cost
flow as described in Section~\ref{duality}.  In Step~2, the algorithm
must move each center to the location that minimizes the sum of
squared distances from the assigned residents to the new center
location.  In a graph, we limit the candidate locations to the
vertices and possibly locations along the edges.  Under such a limit,
it is not hard to compute the best locations.

\section{Acknowledgements}
Thanks to Warren D.~Smith for informing us of references~\cite{olson_rangevoting.org_2011,spann_electoral_2007}.

 \bibliographystyle{plain} 
 \bibliography{Bib/zotero,Bib/other}

\end{document}